\documentclass[letterpaper, 10 pt, conference]{ieeeconf}  

\IEEEoverridecommandlockouts                              

\usepackage{mathtools}
\usepackage{amsmath}
\usepackage{amssymb}
\usepackage{mathrsfs}
\usepackage{graphicx}
\usepackage{subfigure}
\usepackage{color}
\usepackage{bm}
\usepackage{array}
\usepackage{algorithm}
\usepackage{algorithmic}

\usepackage{savesym}
\savesymbol{labelindent}
\usepackage{enumitem}

\makeatletter
\let\NAT@parse\undefined
\makeatother
\usepackage[hidelinks]{hyperref}

\allowdisplaybreaks[4]

\newtheorem{asmp}{\bf\emph{Assumption}}
\newtheorem{lem}{\bf\emph{Lemma}}
\newtheorem{prop}{\bf\emph{Proposition}}
\newtheorem{thm}{\bf\emph{Theorem}}

\newtheorem{rmk}{\bf\emph{Remark}}
\newtheorem{defn}{\bf\emph{Definition}}

\allowdisplaybreaks[4]

\newcommand{\RR}{\mathbb{R}}

\newcommand{\NN}{\mathbb{N}}

\newcommand{\eps}{\varepsilon}

\newcommand{\mtci}{\mathcal{I}}

\newcommand{\mtcl}{\mathcal{L}}

\newcommand{\mtcq}{\mathcal{Q}}

\newcommand{\mtcs}{\mathcal{S}}
\newcommand{\mtct}{\mathcal{T}}

\newcommand{\mtcv}{\mathcal{V}}

\newcommand{\mtcx}{\mathcal{X}}

\newcommand{\bfy}{\mathbf{y}}

\newcommand{\Let}{\coloneqq}

\newcommand{\tx}{\textup}
\newcommand{\tp}{\top}

\newcommand{\diag}{\tx{diag}}

\newcommand{\ab}{\tx{ab}}
\newcommand{\dup}{\tx{dup}}
\newcommand{\mtctab}{\mtct^{\ab}}
\newcommand{\thetaab}{\theta^{\ab}}
\newcommand{\Xab}{X^{\ab}}
\newcommand{\mtcXab}{\mtcx^{\ab}}

\newcommand{\toab}{\to^{\ab}}
\newcommand{\Yab}{Y^{\ab}}
\newcommand{\yab}{y^{\ab}}
\newcommand{\Hab}{H^{\ab}}
\newcommand{\Lambdaab}{\Lambda^{\ab}}
\newcommand{\lambdaab}{\lambda^{\ab}}
\newcommand{\Wab}{W^{\ab}}
\newcommand{\Thetaab}{\Theta^{\ab}}
\newcommand{\xab}{x^{\ab}}
\newcommand{\xchkab}{\check{x}^{\ab}}
\newcommand{\txH}{\tx{H}}
\newcommand{\xchk}{\check{x}}

\newcommand{\dlambda}{d_{\lambda}}
\newcommand{\dx}{d_{x}}

\newcommand{\init}{\tx{init}}

\usepackage{xcolor}
\definecolor{colorMS}{rgb}{0.35,0.4,0.6}

\title{\LARGE \bf
Approximate Simulation-Based Verification of Compatibility \\of the Friedkin--Johnsen Model with Binary Observations*
}
\author{Yu Xing$^{1}$, Aneesh Raghavan$^{2}$, Michael T. Schaub$^{1}$, and  Karl H. Johansson$^{2}$
\thanks{*This work was supported by the Alexander von Humboldt Foundation, Knut and Alice Wallenberg
Foundation (Wallenberg Scholar grant), the Swedish Research Council (Distinguished Professor grant 2017-01078), and the Swedish Foundation for Strategic
Research (SUCCESS FUS21-0026).}
\thanks{$^{1}$YX and MTS are with the Faculty of Computer Science, RWTH Aachen University, Germany. MTS is also with the Department of Biology, RWTH Aachen University.
        {\tt\small \{yu.xing, michael.schaub\}@rwth-aachen.de}}%
\thanks{$^{2}$AR and KHJ are with Department of Decision and Control Systems, School of Electrical Engineering and Computer Science, KTH Royal Institute of
Technology, and also with Digital Futures, Stockholm, Sweden.
        {\tt\small \{aneesh,kallej\}@kth.se}}%
}

\begin{document}

\maketitle
\thispagestyle{empty}
\pagestyle{empty}

\begin{abstract}
We consider a verification problem for opinion dynamics based on binary observations.
The opinion dynamics is governed by a Friedkin-Johnsen (FJ) model, where only a sequence of binary outputs is available instead of the agents' continuous opinions. 
At every time-step we observe a binarized output for each agent depending on whether the opinion exceeds a fixed threshold.
The objective is to verify whether an FJ model with a given set of stubbornness parameters and initial opinions can generate the observed binary outputs up to a small error.
The FJ model is formulated as a transition system, and an approximate simulation relation of two transition systems is defined in terms of the proximity of their opinion trajectories and output sequences.
We then construct a finite set of abstract FJ models by simplifying the influence matrix and discretizing the stubbornness parameters and the initial opinions.
It is shown that the abstraction approximately simulates any concrete FJ model with continuous parameters and initial opinions, and is itself approximately simulated by some concrete FJ model. 
These results ensure that consistency verification can be performed over the finite abstraction. 
Specifically, by checking whether an abstract model satisfies the observation constraints, we can conclude whether the corresponding family of concrete FJ models is consistent with the binary observations.
Finally, numerical experiments are presented to illustrate the proposed verification framework.
\end{abstract}

\section{Introduction}
The study of social opinion dynamics has attracted significant interest across disciplines~\cite{castellano2009statistical}, due to its wide range of applications, such as in marketing and recommender systems~\cite{zha2020opinion}.
In recent years, an increasing number of studies have focused on learning such dynamics from data~\cite{ravazzi2021learning,schaub2020blind,wai2022community}.
However, existing methods often require collecting a long sequence of data from a single trajectory or snapshots from multiple trajectories.
In practice, typically only a few opinion samples are available for each individual, and the data may be discrete-valued rather than continuous~\cite{effrosynidis2022climate,okawa2022predicting,kozitsin2022formal}.
It is thus important to investigate how to infer information about social dynamics from such limited data.

\subsection{Related Work}
Opinion dynamics with continuous states include linear averaging models such as the DeGroot model and the Friedkin--Johnsen (FJ) model~\cite{friedkin1990social}.
With the rise of online social network platforms, researchers have increasingly investigated learning algorithms for opinion dynamics~\cite{ravazzi2021learning}. 
Learning sparse networks based on steady states has been studied in~\cite{wai2016active} for the DeGroot model
, and in~\cite{ravazzi2017learning} for the FJ model.
Learning network structure from discrete-valued dynamical observations has been studied in e.g.,~\cite{xing2023recursive,xie2023finite}.
Inferring community structures is another approach to reducing sample complexity for large networked dynamics, such as epidemics~\cite{peixoto2019network}, cascade dynamics~\cite{prokhorenkova2022less}, and opinion dynamics~\cite{xing2024learning}.
A blind community detection approach is proposed in~\cite{schaub2020blind,wai2022community}, which applies spectral clustering to sample covariance matrices.

Data for learning opinion dynamics can take various forms. 
Small-group controlled experiments can generate complete opinion evolution trajectories~\cite{friedkin1990social,vande2016modelling}. 
However, data collected from online discussions usually contain only a few samples for each individual~\cite{effrosynidis2022climate,okawa2022predicting,kozitsin2022formal}, which is typically much smaller than the network size.
Moreover, although opinions can be quantified by methods such as opinion leaning estimation~\cite{cinelli2021echo} and Bayesian inference~\cite{barbera2015birds}, stance expressed in user posts is often coarsely classified into a few discrete categories (e.g., favor, against, neutral)~\cite{effrosynidis2022climate,mohammad2016semeval}.
These challenges indicate the need to develop new methods capable of inferring models about opinion dynamics from limited and discrete data.

Formal methods~\cite{baier2008principles,tabuada2009verification} have widely been applied in control systems, providing a rigorous framework for expressing and reasoning about complex system properties.
Model checking aims to verify whether a system satisfies given specifications~\cite{baier2008principles}, whereas control synthesis focuses on designing controllers that drive the system toward desired behaviors~\cite{yordanov2011temporal}.
Several works have extended formal verification to swarm systems.
For example,~\cite{konur2010formal} introduces a probabilistic state transition system to verify global behavior of robot swarms.
For an open multi-agent system describing diffusion dynamics,~\cite{belardinelli2015formal} shows that verification problems can be reduced to checking finite abstractions.
The paper~\cite{kouvaros2015verifying} develops a framework to verify temporal-epistemic properties of swarm systems, and~\cite{kouvaros2016formal} applies it to verifying consensus properties in opinion dynamics.

\subsection{Contribution}
In this paper we consider a verification problem for the FJ model with binary observations.
It is assumed that agents generate binarized outputs at each time step. That is, their continuous opinions cannot be observed.
The problem of interest is to determine whether there exists an FJ model, with a given set of parameters, that can generate the observed binary observations.

The FJ model is remodeled as a transition system on a finite set by discretizing the system parameters and initial conditions. 
Subsequently the model identification problem is formulated as a verification problem using temporal logic, and an approximate simulation relation is introduced suited for the discontinuous quantization function.
Unlike existing approaches, the proposed approximate simulation relation introduced in this paper incorporates the proximity of both opinions and outputs.
It is shown that the constructed abstraction can approximately simulate the original FJ dynamics when the discretization is sufficiently fine (Theorem~\ref{thm:apprx_simul}).
As a consequence, the abstraction yields a necessary condition for consistency with observations (Proposition~\ref{prop:verification}): 
Violation of the specification by the abstraction guarantees violation by the original system, whereas satisfaction by the abstraction indicates potential consistency.

Unlike prior work that investigates verification of global behaviors in swarm systems~\cite{belardinelli2015formal,kouvaros2015verifying,kouvaros2016formal}, the current work shifts the focus to reasoning about system parameters based on observed behaviors, providing a novel approach to understanding complex dynamics from limited data.
By leveraging logical specifications, the proposed method can incorporate missing or partial data as constraints and allows individual information, providing a flexible framework for reasoning about multi-agent systems.

\subsection{Outline}
The rest of the paper is organized as follows.
In Section~\ref{sec:preliminaries} we introduce preliminary knowledge and formulate the problem.
Section~\ref{sec:result} provides main results.
Section~\ref{sec:simul} presents numerical experiments, and Section~\ref{sec:conclusion} concludes the paper.

\subsubsection*{Notation} 
Let $\mathbb{R}^n$, $\mathbb{R}^{n\times m}$, and $\mathbb{N}$ be the $n$-dimensional Euclidean space, the set of $n\times m$ real matrices, and the set of nonnegative integers, respectively. 
$I$ represents the identity matrix.
We denote the Euclidean norm of a vector and the spectral norm of a matrix by $\|\cdot\|$.
For a vector $x\in \mathbb{R}^n$, by $x_i$ denote its $i$-th entry,
by $\diag(x)$ the $n$-dimensional diagonal matrix with diagonal entries $x_{1}, x_{2}, \dots, x_{n}$,
and by $\dup(x)$ the vector $[x^{\top}\ x^{\top}]^{\top} \in \RR^{2n}$ obtained by vertically stacking $x$ twice. 
We define the duplication of a set $\mtcx \subseteq \RR^{n}$ as the set of vectors obtained by vertically stacking each vector $x \in \mtcx$ twice, i.e., $\dup(\mtcx) \Let \{\dup(x): x \in \mtcx\} \subseteq \RR^{2n}$. 
For a vector $\xchk \in \RR^{2n}$, we denote the first $n$ entries of $\xchk$ by $\xchk_{1:n}$.
We call a square matrix $A = [a_{ij}]\in \RR^{n\times n}$ stochastic, if $a_{ij} \geq 0$ and $\sum_{k = 1}^{n} a_{ik} = 1$ for $1 \leq i, j \leq n$. 
Denote the set of $n$-dimensional stochastic matrices by $\mtcs_{n}$.
The cardinality of a set $\mtct$ is $|\mtct|$. 
The unit interval $[0,1]$ is denoted by $\mtci$.

\section{Preliminaries}\label{sec:preliminaries}
In this section we introduce the concepts of transition systems and logic languages, and formulate the problem.

\subsection{The Friedkin--Johnsen Model}\label{sec:problem}
We study a parameter reasoning problem in opinion dynamics over social networks, where $n$ agents exchange opinions about a topic. 
The opinion of agent $i$ at time $t$ is denoted by $x_{i}(t)$, and the vector $x(t) \in \mtci^{n}$ collects the opinions of all agents. 
The opinion formation process is modeled by the FJ model~\cite{friedkin1990social}
\begin{equation}\label{eq:model_specific}
        x(t+1) = (I - \Lambda) W x(t) + \Lambda x(0), \text{ for } t \in \mathbb{N},
\end{equation}
where $\Lambda = \diag(\lambda) \in \mtci^{n\times n}$ is a diagonal matrix with the diagonal $\lambda_{i} \in \mtci$ representing the stubbornness of agent~$i$, 
and $W \in \mtcs_{n}$ is the influence weight matrix. 
Note that the above model can be written as follows:
\begin{align}\label{eq:model_specific2}
    \begin{bmatrix}
        x(t+1) \\ z(t+1)
    \end{bmatrix}
    = 
    \begin{bmatrix}
        (I - \Lambda) W & \Lambda \\
        0 & I
    \end{bmatrix}
    \begin{bmatrix}
        x(t) \\ z(t)
    \end{bmatrix},\;
    z(0) = x(0).
\end{align}

In practice, real-valued opinions are often not observable, due to coarse measurements or privacy restrictions.
Instead, quantized observations may be available. 
For example, if an agent has an opinion beyond a threshold $\gamma$, she may report a value of $1$, and otherwise $0$ (a common choice of $\gamma$ is $\gamma = 0.5$).
Overall we thus observe a binary sequence $y(t)$:
\begin{align}\label{eq:binary_observation}
    y(t) = \mtcq(x(t)),\quad ~t \in \NN,
\end{align}
where $\mtcq$ is the quantization function such that $y_{i}(t) = 1$ if $x_{i}(t) \geq \gamma$ and $y_{i}(t) = 0$ otherwise.



\subsection{Transition Systems}
Discrete-time dynamical systems can be formulated as transition systems~\cite{baier2008principles,tabuada2009verification}. In our definition, we omit the set of inputs, since the FJ model does not have inputs.
\begin{defn}[Transition System]
    A transition system is a tuple $\mtct = (\mtcx, X_{0}, \theta, \to, Y, H)$, where
    \begin{itemize}
        \item $\mtcx$ is a set of states,
        \item $X_{0} \subseteq \mtcx$ is a set of initial states,
        \item $\theta \in \Theta$ contains parameters of the system, where $\Theta$ is the parameter set,
        \item $\to \subseteq \mtcx \times \mtcx$ is a transition relation depending on $\theta$,
        \item $Y$ is a set of outputs, and
        \item $H\colon \mtcx \to Y$ is an output map.
    \end{itemize}
    We use $\mtct(X_{0}, \theta)$ to emphasize the dependence on parameters and initial states.
    If $X_{0}$ is a singleton, i.e., $X_{0} = \{x_{\init}\}$, we denote the system by $\mtct(x_{\init}, \theta)$.
\end{defn}

The FJ model with binary observations~\eqref{eq:model_specific2}--\eqref{eq:binary_observation} can then be regarded as a transition system $\mtct(\xchk_{\init}, \theta)$, where $\xchk_{\init} \Let \dup(x(0))$. 
Specifically, the parameters are given by $\theta = (\lambda, W) \in \Theta = \mtci^{n} \times \mtcs_{n}$.
The set of states is $\mtcx \subseteq \mtci^{2n}$
and the set of initial states is $X_{0} = \{\xchk_{\init}\} \subseteq \dup(\mtci^{n})$. 
The transition relation $\to$ is defined such that $(\xchk, \xchk^{+}) \in \to$ if and only if 
\begin{align*}
    \xchk^{+} = 
    \begin{bmatrix}
        (I - \Lambda) W & \Lambda \\
        0 & I
    \end{bmatrix}
    \xchk.
\end{align*}
Finally, the set of outputs is $Y = \{0,1\}^{n}$, and the output map is $H(\xchk) = \mtcq(\xchk_{1:n})$.
The previously defined transition system $\mtct(\xchk_{\init}, \theta)$ with a singleton initial state possesses a unique path (defined in~\cite{baier2008principles}) $\sigma = (\xchk(0), \xchk(1), \dots)$ satisfying that $\xchk(0) = \xchk_{\init}$ and $\xchk(t) = [x(t)^{\tp}~x(0)^{\tp}]^{\tp}$.

\subsection{Approximate Simulation}
We introduce the following normalized Hamming distance as  metric for binary signals $x, y \in \{0,1\}^{n}$,
\begin{align*}
    \|x - y\|_{\txH} = \frac{1}{n} \sum_{i=1}^{n} |x_{i} - y_{i}|.
\end{align*}

For two transition systems defined by the FJ model, we define the following approximate simulation relation, which implies that, for the two systems, both states and binary observations are close.
\begin{defn}[Approximate Simulation Relation]
    Consider two transition systems $\mtct_{a} = (\mtcx_{a}, X_{a0}, \to_{a}, \theta_{a}, \linebreak[4] Y, H)$ and $\mtct_{b} = (\mtcx_{b}, X_{b0}, \to_{b}, \theta_{b}, Y, H)$, where $\mtcx_{a}, \mtcx_{b}$ $ \subseteq \RR^{2n}$, $X_{a0}, X_{b0} \subseteq \dup(\RR^{n})$, and $Y = \{0,1\}^{n}$ is equipped with the normalized Hamming distance. Let $\delta \geq 0$ be a nonnegative real number. A relation $R \subseteq \mtcx_{a} \times \mtcx_{b}$ is an $\delta$-approximate simulation relation from $\mtct_{a}$ to $\mtct_{b}$ if the following conditions are satisfied:
    \begin{itemize}
        \item for every $\xchk^{a0} \in X_{a0}$, there exists $\xchk^{b0} \in X_{b0}$ with $(\xchk^{a0}, \xchk^{b0}) \in R$,
        \item for every $(\xchk^{a}, \xchk^{b}) \in R$, it holds that $\|\xchk^{a}_{1:n} - \xchk^{b}_{1:n}\| \leq \delta \sqrt{n}$ and $\|H(\xchk^{a}) - H(\xchk^{b}) \|_{\txH} \leq \delta$, and 
        \item for every $(\xchk^{a}, \xchk^{b}) \in R$, it holds that $\xchk^{a} \to_{a} (\xchk^{a})^{\prime}$ in $\mtct_{a}$ implies the existence of $\xchk^{b} \to_{b} (\xchk^{b})^{\prime}$ in $\mtct_{b}$ such that $((\xchk^{a})^{\prime}, (\xchk^{b})^{\prime}) \in R$.
    \end{itemize}
    We say that $\mtct_{a}$ is $\delta$-approximately simulated by $\mtct_{b}$ or that $\mtct_{b}$ $\delta$-approximately simulates $\mtct_{a}$, denoted by $\mtct_{a} \preceq_{\delta} \mtct_{b}$.
\end{defn}

\begin{rmk}\label{rmk:apprx_simulation}
The preceding definition introduces stronger conditions than the classic approximate simulation \cite{tabuada2009verification,girard2007approximation}, which only assumes output proximity.
However, in our system, a small output error $\|H(\xchk^{a}) - H(\xchk^{b}) \|_{\txH}$ does not guarantee $x^{a}$ and $x^{b}$ are close.
On the other hand, due to the discontinuity of the quantization function $\mtcq$, two nearby states $x^{a}$ and $x^{b}$ may produce entirely different outputs.
\end{rmk}

\subsection{Specification and Problem Statement}\label{subsec:specification}
We are interested in verifying whether a transition system with certain parameters is consistent with an observed sequence of binary outputs. 
To formally specify temporal output properties of the transition system $\mtct$, we use signal temporal logic (STL)~\cite{raman2015reactive}. 
An STL formula $\varphi$ consists of a finite set of predicates $\mu$, a set of logical operators, and a set of temporal operators.
The syntax of STL can be described as:
\begin{align*}
\varphi ::=  true \; |\; \mu \; |\; \neg \varphi\;  |\; \varphi_{1} \wedge \varphi_{2} \; |\; \varphi_{1} \mathbf{U}_{[a, b]} \varphi_{2}, 
\end{align*}
where $\neg$ is the negation operator, $\wedge$ is the conjunction operator, and $\mu$ is a predicate defined through a predicate function $g_{\mu}\colon Y \to \RR$ as
\begin{align*}
    \mu \Let \begin{cases}
        true, & \tx{ if } g_{\mu}(y) \geq 0, \\
        false, & \tx{ if } g_{\mu}(y) < 0,
    \end{cases} \quad \tx{ for } y \in Y.
\end{align*}
The disjunction operator can be then defined as $\varphi_{1} \vee \varphi_{2} = \neg (\neg \varphi_{1} \wedge \neg \varphi_{2})$.
For a trajectory $\bfy = \{y(0),~ y(1),~ \dots\}$ and time $t \geq 0$, we say $\bfy$ satisfies $\mu$ at $t$, if $g_{\mu}(y(t)) \geq 0$, i.e., $(\bfy, t) \models \mu \tx{ iff } g_{\mu}(y(t)) \geq 0$.
Inductively, 
\begin{align*}
    &(\bfy, t) \models \neg \varphi &\tx{ iff } &\neg ((\bfy, t) \models \varphi), \\
    &(\bfy, t) \models \varphi_{1} \wedge \varphi_{2} &\tx{ iff } &(\bfy, t) \models \varphi_{1} \wedge (\bfy, t) \models \varphi_{2}, \\
    &(\bfy, t) \models \varphi_{1} \mathbf{U}_{[a, b]} \varphi_{2} &\tx{ iff } &\exists t^{\prime} \in [t + a, t + b] \tx{ s.t. } (\bfy, t) \models \varphi_{2} \\
    & & &\wedge \forall t^{\prime\prime} \in [t, t^{\prime}], (\bfy, t) \models \varphi_{1}. 
\end{align*}
The property that the output sequence is compatible with the observed ones $\{\tilde{y}(t), 0 \leq t \leq T\}$ is encoded by a sequence of predicates $\Phi^{\kappa} = \{\varphi^{\kappa}(t), 0 \leq t \leq T\}$ with $g_{\varphi^{\kappa}(t)}(y) \Let \kappa - \|y - \tilde{y}(t)\|_{\txH}$, where $\kappa \geq 0$. 
For the output sequence $\bfy  = \{y(t), 0 \leq t \leq T\}$ of a transition system $\mtct(\xchk, \theta)$, define $\bfy \models \Phi^{\kappa}$ iff $(\bfy, t) \models \varphi^{\kappa}(t)$ for all $0 \leq t \leq T$.
Since $\mtct(\xchk, \theta)$ admits only a unique path, we say $\mtct(\xchk, \theta) \models \Phi^{\kappa}$ if $\bfy \models \Phi^{\kappa}$, where $\bfy$ corresponds to its unique output sequence.
The following problem is studied in this paper.

{\bf Problem.} Given a sequence of binary observations from the FJ model $\{\tilde{y}(0), \dots, \tilde{y}(T)\}$, a set of initial states $X_{0}$, a set of parameters $\Theta_{0}$, an estimate of the influence matrix $W$, and a nonnegative real number $\kappa \geq 0$, verify wheher there exists $(\xchk, \theta) \in X_{0} \times \Theta_{0}$ such that 
\begin{align*}
    \mtct(\xchk, \theta) \models \Phi^{\kappa}.
\end{align*}
The satisfaction of $\Phi^{\kappa}$ by $\mtct(\xchk, \theta)$ means that the binary output $y(t)$ is consistent with the observed $\tilde{y}(t)$ up to an error $\kappa$ for all $t \leq T$, where $\kappa$ represents the tolerance level. When $\kappa = 0$, an exact match is required, whereas a small $\kappa > 0$ provides robustness in the verification.
We construct abstractions that approximately simulate the original system (Theorem~\ref{thm:apprx_simul}), and develop a verification framework through abstraction (Proposition~\ref{prop:verification}).

\section{Main Results}\label{sec:result}
In this section, we provide conditions under which the original transition system $\mtct(\xchk_{\init}, \theta) = (\mtcx, X_{0}, \theta, \to, Y, H)$ \linebreak[4] can be $\delta$-approximately simulated by an abstraction. 
This makes it possible to transfer the previously defined verification problem for the original system to its abstraction.

We construct the following abstraction to approximate the original transition system $\mtct(\xchk_{\init}, \theta)$.
The abstraction $\mtctab$ is a tuple $\mtctab = (\mtcXab, \Xab_{0}, \thetaab, \toab, \Yab, \Hab)$, where
\begin{itemize}
    \item $\mtcXab \subseteq \mtci^{2n}$ is the set of states,
    \item $\Xab_{0}$ is the set of initial states, which will be defined in Assumption~\ref{asmp:apprx},
    \item $\thetaab = (\lambdaab, \Wab) \in \Thetaab$ is the parameters of the system, with $\Thetaab$ the set of parameters, which will be defined in Assumption~\ref{asmp:apprx},
    \item $\toab \subseteq \Xab \times \Xab$ is the transition relation, such that $(\xchkab, (\xchkab)^{+}) \in \toab$ if and only if
    \begin{align*}
        (\xchkab)^{+} = 
        \begin{bmatrix}
        (I - \Lambdaab) \Wab & \Lambdaab \\
        0 & I
        \end{bmatrix} \xchkab,
    \end{align*}
    where $\Lambdaab = \diag(\lambdaab)$,
    \item $\Yab = Y$ is the output set, and
    \item $\Hab(\xchkab) = \mtcq(\xab_{1:n})$ is the output map.
\end{itemize}

In the following assumptions, we introduce 
sets of parameters and initial states, which are used to construct abstractions that approximately simulate behaviors of the original system.
\begin{asmp}\label{asmp:apprx} For the abstraction $\mtctab$, let $\Thetaab = \Thetaab_{\lambda} \times \Thetaab_{w}$ be the set of parameters, and $d_{x}$ and $d_{\lambda}$ be positive integers. Suppose that the following conditions hold.
    \begin{enumerate}
        \item The set of initial states $\Xab_{0}$ satisfies that $\Xab_{0} \subseteq \mtcx_{0}^{\ab}(d_{x})$, where $\mtcx_{0}^{\ab} \Let \dup ( \{\frac{1}{2\dx}, \frac{3}{2\dx}, \dots, \frac{2\dx - 1}{2\dx} \}^{n} )$.
        \item The stubbornness of each agent in the abstraction $\mtctab$ takes value in the set $\mtcl(\dlambda) \Let \{0, \frac{1}{\dlambda}, \dots, \frac{\dlambda - 1}{\dlambda},  1 \}$, i.e., $\Thetaab_{\lambda} = \mtcl(\dlambda)^{n}$, and $\lambdaab \in \Thetaab_{\lambda}$. 
        \item The set of weight matrix $\Thetaab_{w} = \{\Wab\}$ with $\Wab \in \mtcs_{n}$ such that 
            \begin{align}\label{eq:asmp_W}
                \|\Wab - W\| \leq \eps_{w},
            \end{align}
        where $W$ is the weight matrix of the original system.
    \end{enumerate}
\end{asmp}

\begin{rmk}
    We discretize the stubbornness parameters and initial states. 
    We also assume that the network structure can be observed, subject to certain noise. 
    When the network is generated from a random graph model, the  normalized adjacency matrices are known to concentrate around averaged versions~\cite{schaub2020blind}, which is a property described by~\eqref{eq:asmp_W}.
\end{rmk}

For the transition system $\mtct(\xchk_{\init}, \theta)$, recall that $\xchk_{\init} = \dup(x(0))$. 
The next lemma indicates the existence of an abstraction, whose parameters and initial states are close to those of the original system, such that the states of the two systems remain close. 

\begin{lem}\label{lem:apprx}
    Suppose that Assumption~\ref{asmp:apprx} holds, and the following conditions hold for $\thetaab \in \Thetaab$ and $\xchkab_{\init} \in \mtcx^{\ab}_{0}$
    \begin{align}\label{eq:asmp_x0}
        \|x_{\init} - \xab_{\init}\| \leq \frac{\sqrt{n}}{2\dx}, \\ \label{eq:asmp_lambda}
        \|\Lambda - \Lambdaab\| \leq \frac{1}{2\dlambda},
    \end{align}
    where $x_{\init} = [\xchk_{\init}]_{1:n}$, $\xab_{\init} = [\xchkab_{\init}]_{1:n}$, $\Lambda = \diag(\lambda)$ and $\Lambdaab = \diag(\lambdaab)$.
    Then the paths $\{\xchk(t)\}$ and $\{\xchkab(t)\}$ of $\mtct(\xchk_{\init}, \theta)$ and $\mtctab( \xchkab_{\init}, \thetaab)$ satisfy that, for $t \geq 0$,
    \begin{align}\notag
        &\|x(t+1) - \xab(t+1)\| \\\label{eq:one_step_bound}
        &\leq \|(I - \Lambda) W\| \|x(t) - \xab(t)\| + \sqrt{n} \eps_{x},
    \end{align}
    where $x(t) = \xchk_{[1:n]}(t)$, $\xab(t) = \xchkab_{[1:n]}(t)$, and
    \begin{align}\label{eq:eps_x}
        \eps_{x} = \frac{\|W\| + 1}{2\dlambda} + \frac{1}{2\dx} + \eps_{w}. 
    \end{align}
    If it further holds that $\|(I - \Lambda) W\| < 1$, then 
    \begin{align*}
        \sup_{t\geq 0} \|x(t) - \xab(t)\| \leq \frac{\eps_{x} \sqrt{n}}{1 - \|(I - \Lambda) W\|}.
    \end{align*}
\end{lem}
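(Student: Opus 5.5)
The plan is to write the one-step error recursion explicitly and bound each piece by the triangle inequality. Write $e(t) \Let x(t) - \xab(t)$ and use the first block rows of the transition relations of $\mtct(\xchk_{\init},\theta)$ and $\mtctab(\xchkab_{\init},\thetaab)$. Inserting and subtracting $(I-\Lambda)W\xab(t)$ and $(I-\Lambdaab)W\xab(t)$, I would decompose
\begin{align*}
e(t+1) &= (I-\Lambda)W e(t) + (\Lambdaab - \Lambda) W \xab(t) + (I-\Lambdaab)(W - \Wab)\xab(t) \\
&\quad + \Lambda (x_{\init} - \xab_{\init}) + (\Lambda - \Lambdaab)\xab_{\init}.
\end{align*}
The first term produces $\|(I-\Lambda)W\|\|e(t)\|$. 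The remaining four terms should together give $\sqrt{n}\,\eps_x$.

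The key auxiliary fact is that $\xab(t) \in \mtci^{n}$ for all $t$, so that $\|\xab(t)\| \le \sqrt{n}$. I would prove this by induction. Since $\Wab$ is stochastic, $\Wab\xab(t)$ is a row-wise convex combination of entries in $\mtci$. Since $\lambdaab_i \in \mtci$, each entry $\xab_i(t+1)$ is a convex combination of $[\Wab\xab(t)]_i$ and $\xab_i(0)$, hence lies in $\mtci$. The same holds for $\xab_{\init}$, because $\mtcx_0^{\ab}$ has entries in $(0,1)$.

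With this fact, the four terms are bounded as follows, using that the spectral norm of a diagonal matrix is its maximal absolute diagonal entry.
\begin{itemize}
\item $\|(\Lambdaab-\Lambda)W\xab(t)\| \le \frac{\|W\|}{2\dlambda}\sqrt{n}$, by \eqref{eq:asmp_lambda}.
\item $\|(I-\Lambdaab)(W-\Wab)\xab(t)\| \le \eps_w\sqrt{n}$, since $\|I-\Lambdaab\|\le 1$, together with \eqref{eq:asmp_W}.
\item $\|\Lambda(x_{\init}-\xab_{\init})\| \le \frac{\sqrt{n}}{2\dx}$, since $\|\Lambda\|\le 1$, together with \eqref{eq:asmp_x0}.
\item $\|(\Lambda-\Lambdaab)\xab_{\init}\| \le \frac{\sqrt{n}}{2\dlambda}$, by \eqref{eq:asmp_lambda}.
\end{itemize}
Summing these gives exactly $\sqrt{n}\,\eps_x$ with $\eps_x$ as in \eqref{eq:eps_x}, which proves \eqref{eq:one_step_bound}.

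For the uniform bound, set $\rho \Let \|(I-\Lambda)W\| < 1$ and $c \Let \sqrt{n}\,\eps_x$, and show $\|e(t)\| \le c/(1-\rho)$ for all $t$ by induction.
\begin{itemize}
\item Base case: $\|e(0)\| \le \frac{\sqrt{n}}{2\dx} \le c \le \frac{c}{1-\rho}$, by \eqref{eq:asmp_x0}.
\item Inductive step: by \eqref{eq:one_step_bound}, $\|e(t+1)\| \le \rho\,\frac{c}{1-\rho} + c = \frac{c}{1-\rho}$.
\end{itemize}
Taking the supremum over $t \ge 0$ gives the claim.

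I expect no real obstacle here. The only step needing care is the invariance $\xab(t)\in\mtci^n$, which justifies replacing $\|\xab(t)\|$ by $\sqrt{n}$ uniformly in $t$. Without it, the perturbation terms involving $\Wab-W$ and $\Lambdaab-\Lambda$ could not be bounded independently of $t$.
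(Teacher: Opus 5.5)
Your proof is correct and follows the paper's argument: the same one-step decomposition, the same triangle-inequality bounds that sum to $\sqrt{n}\,\eps_x$, and induction. The only decomposition difference is cosmetic: you split $\Lambda x(0) - \Lambdaab \xab(0)$ as $\Lambda(x(0)-\xab(0)) + (\Lambda-\Lambdaab)\xab(0)$ instead of the paper's $(\Lambda-\Lambdaab)x(0) + \Lambdaab(x(0)-\xab(0))$, with identical constants. Your explicit proof that $\xab(t)\in\mtci^n$ and your check of the base case $\|e(0)\|\le \sqrt{n}\,\eps_x/(1-\|(I-\Lambda)W\|)$ fill in details the paper leaves implicit; it simply takes the abstract state space to lie in $\mtci^{2n}$ and says the conclusion follows by induction.
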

\begin{proof}
    It follows from~\eqref{eq:asmp_x0} and~\eqref{eq:asmp_lambda} that
    \begin{align*}
        &\|\Lambda - \Lambdaab\| 
        \le \frac{1}{2\dlambda},\\
        &\|(I - \Lambda) W - (I - \Lambdaab) \Wab\| \\
        &\leq \|(I - \Lambdaab) (W - \Wab)\| + \|(\Lambda - \Lambdaab) W\| \\
        &\leq \eps_{w} + \frac{\|W\|}{2\dlambda} .
    \end{align*}
    Note that $x_{\init} = x(0)$, $\xab_{\init} = \xab(0)$, and
    \begin{align*}
        &\|x(t+1) - \xab(t+1)\| \\
        &= \| (I - \Lambda) W x(t) + \Lambda x(0) \\
        &\quad\ - (I - \Lambdaab) \Wab \xab(t) - \Lambdaab \xab(0) \| \\\notag
        &= \| (I - \Lambda) W (x(t) - \xab(t)) \\ 
        &\quad\ + ((I - \Lambda) W - (I - \Lambdaab) \Wab) \xab(t)  \\\notag
        &\quad\ + (\Lambda - \Lambdaab) x(0)  + \Lambdaab (x(0) - \xab(0))\| \\
        &\leq \|(I - \Lambda) W\| \|x(t) - \xab(t)\|+ \bigg(\frac{\|W\|}{2\dlambda} + \eps_{w} \bigg) \|\xab(t)\| \\
        &\quad\ +  \frac{1}{2\dlambda} \|x(0)\| + \frac{\sqrt{n}}{2\dx}  \\
        &\leq \|(I - \Lambda) W\| \|x(t) - \xab(t)\| \\
        &\quad\ + \sqrt{n} \bigg(\frac{\|W\| + 1}{2\dlambda} + \frac{1}{2\dx} + \eps_{w} \bigg).
    \end{align*}
    The conclusion follows from induction.
\end{proof}
\begin{rmk}\label{rmk:lem:approx}
    The previous lemma indicates that the state approximation of the original system by the abstraction is bounded by an error depending on $\eps_{x}$, when $(I - \Lambda) W$ is a contraction. 
    This value vanishes as $\dlambda$ and $\dx$ grows and $\eps_{w}$ decreases.
    When the error $\eps_{x}$ is small enough, at each time $t$, most entries of $x(t)$ and $\xab(t)$ are close to each other.
\end{rmk}

As noted in Remark~\ref{rmk:apprx_simulation}, the discontinuity in the quantization function complicates the characterization of the relation between opinions and their binary outputs.
To obtain the approximate simulation relation, we introduce the following index set of a vector $x$ whose entries are close to the quantization threshold $\gamma$.
For a vector $x \in \RR^{n}$, we define $\mtcv(x, \eta) \Let \{i \in \mtcv: |x_{i} - \gamma| \leq \eta\}$ as the set of indices $i$ such that the absolute difference between $x_{i}$ and $\gamma$ is less than $\eta$.
Since the $\mtct(\xchk_{\init},\theta)$ has only a unique path, hereafter we restrict its state space to the set of states visited along this path, denoted by $\mtcx_{\tx{res}}$.
The following technical assumption ensures that only a few agents have opinions close to the threshold $\gamma$ for the given time period of observation.
Relaxing this assumption is left to future work.
\begin{asmp}\label{asmp:prop}
    Consider the transition system $\mtct(\xchk_{\init},\theta)$ with an initial state $\xchk_{\init} \in \dup(\mtci^{n})$ and parameters $\theta \in \mtci^{n} \times \mtcs_{n}$.
    Assume that there exists a constant $\delta > 0$ such that $|\mtcv(\xchk_{1:n}, \sqrt{2 \delta})| \le \delta n / 2$ for all $\xchk \in \mtcx_{\tx{res}}$.
\end{asmp}

The following theorem shows that the original system can be approximately simulated by the abstraction in Lemma~\ref{lem:apprx}.

\begin{thm}\label{thm:apprx_simul}
    Consider the transition system $\mtct(\xchk_{\init}, \theta)$ and an abstraction $\mtctab(\xchkab_{\init}, \thetaab)$ defined in Lemma~\ref{lem:apprx}.
    Suppose that Assumptions~\ref{asmp:apprx} and~\ref{asmp:prop} hold. 
    If the conditions below hold
    \begin{align}\label{eq:asmp_dx}
        &\dx \geq 1 / 2 \delta, \\ \label{eq:asmp_lambda_w}
        &\|(I - \Lambda) W\| < 1, \\ \label{eq:apprx_condition}
        &\eps_{x} \leq (1 - \|(I - \Lambda) W\|) \delta,
    \end{align}
    then $\mtct(\xchk_{\init}, \theta) \preceq_{\delta} \mtctab(\xchkab_{\init}, \thetaab)$, where $\eps_{x}$ is given in~\eqref{eq:eps_x}.
\end{thm}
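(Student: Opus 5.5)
The relation I would use is the one induced by time along the two unique paths. I take $R \Let \{(\xchk(t), \xchkab(t)) : t \in \NN\} \subseteq \mtcx_{\tx{res}} \times \mtcXab$, where $\{\xchk(t)\}$ and $\{\xchkab(t)\}$ are the paths of $\mtct(\xchk_{\init},\theta)$ and $\mtctab(\xchkab_{\init},\thetaab)$ from Lemma~\ref{lem:apprx}. Conditions one and three of the definition of $\delta$-approximate simulation are then immediate. The initial pair $(\xchk_{\init},\xchkab_{\init}) = (\xchk(0),\xchkab(0))$ lies in $R$. Both transition relations are deterministic, so the successor of $\xchk(t)$ is $\xchk(t+1)$, and the abstraction answers with $\xchkab(t+1)$; the pair $(\xchk(t+1),\xchkab(t+1))$ is again in $R$. (If $\xchk(t)=\xchk(s)$ for $t\neq s$, this still works, since we can match using either index; the bounds below hold uniformly in $t$.) So all the work is in the second condition, which must hold for every $t$.

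\emph{State proximity.} By \eqref{eq:asmp_lambda_w} the last claim of Lemma~\ref{lem:apprx} applies, and with \eqref{eq:apprx_condition} it gives
\begin{align*}
\sup_{t\ge 0}\|x(t)-\xab(t)\| \le \frac{\eps_x\sqrt n}{1-\|(I-\Lambda)W\|} \le \delta\sqrt n .
\end{align*}
This is exactly the required bound $\|\xchk_{1:n}-\xchkab_{1:n}\|\le \delta\sqrt n$. Condition \eqref{eq:asmp_dx} should enter only through the hypothesis \eqref{eq:asmp_x0} of Lemma~\ref{lem:apprx}: it makes the initial discretization error $1/(2\dx)$ at most $\delta$, so an abstract initial state at that distance exists and is compatible with $\eps_x \le (1-\|(I-\Lambda)W\|)\delta$.

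\emph{Output proximity.} This is the main obstacle, because $\mtcq$ is discontinuous. Fix $t$ and set $e = x(t)-\xab(t)$, so $\|e\|^2 \le \delta^2 n$. An index $i$ can have $y_i(t)\neq \yab_i(t)$ only if $x_i(t)$ and $\xab_i(t)$ lie on opposite sides of $\gamma$. This forces either $|x_i(t)-\gamma|\le\sqrt{2\delta}$, or $|e_i| > \sqrt{2\delta}$. The first set is $\mtcv(x(t),\sqrt{2\delta})$, and Assumption~\ref{asmp:prop} bounds its size by $\delta n/2$. For the second set, a Markov/Chebyshev counting argument gives
\begin{align*}
|\{i : |e_i|>\sqrt{2\delta}\}| < \frac{\|e\|^2}{2\delta} \le \frac{\delta^2 n}{2\delta} = \frac{\delta n}{2}.
\end{align*}
Summing the two bounds, at most $\delta n$ indices disagree, so $\|H(\xchk(t))-\Hab(\xchkab(t))\|_{\txH} \le \delta$.

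The delicate point is that the threshold $\sqrt{2\delta}$ must be chosen so that both halves give $\delta n/2$; this is exactly why Assumption~\ref{asmp:prop} is stated with radius $\sqrt{2\delta}$. I will also check the boundary case $x_i = \gamma$ against the convention $\ge \gamma$, but that case is already absorbed into $\mtcv$. With conditions one, two and three verified, $\mtct(\xchk_{\init},\theta)\preceq_\delta \mtctab(\xchkab_{\init},\thetaab)$.
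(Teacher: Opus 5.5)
Your proof is correct and follows the paper's argument: the contraction bound of Lemma~\ref{lem:apprx} gives $\|x(t)-\xab(t)\|\le\delta\sqrt n$, and the same counting argument gives the output bound (at most $\delta n/2$ indices with $|e_i|>\sqrt{2\delta}$, plus at most $\delta n/2$ indices near $\gamma$ by Assumption~\ref{asmp:prop}). The differences are cosmetic. You use the time-indexed relation along the two paths and the sup bound, which already covers $t=0$; this makes \eqref{eq:asmp_dx} redundant, since $1/(2\dx)\le\eps_x\le\delta$. The paper instead propagates the invariant $\|x-\xab\|\le\delta\sqrt n$ via the one-step bound and uses \eqref{eq:asmp_dx} only to put the initial pair in $R$.
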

\begin{proof}
    For $(\xchk, \xchkab) \in R$ with $\|x - \xab\| \leq \delta \sqrt{n}$, $\xchk \to \xchk^{+}$, and $\xchkab \toab (\xchkab)^{+}$, it follows from~\eqref{eq:one_step_bound} that
    \begin{align*}
        &\|x^{+} - (\xab)^{+}\| \\
        &\leq \|(I - \Lambda) W\| \delta \sqrt{n} + \sqrt{n} \eps_{x}
        \leq \delta \sqrt{n}.
    \end{align*}
    This bound ensures that
    \begin{align*}
        \|x^{+}|_{(\mtcv(\sqrt{2\delta}))^{\tx{c}}} - (\xab)^{+}|_{(\mtcv(\sqrt{2\delta}))^{\tx{c}}}\| \leq \|x^{+} - (\xab)^{+}\| \leq \delta \sqrt{n},
    \end{align*}
    where for a vector $x$ and an index set $\mtcs$, $x|_{\mtcs}$ represents the restriction of the vector to the entries whose indices are in $\mtcs$, and for a set $\mtcs \subseteq \mtcv$ we denote $\mtcs^{\tx{c}} = \mtcv \setminus \mtcs$.
    This upper bound implies that, except for $\delta n/2$ agents in $(\mtcv(\sqrt{2\delta}))^{\tx{c}}$, every agent $i$ satisfies that $|x^{+}_{i} - (\xab)^{+}_{i}| < \sqrt{2\delta}$.
    From Assumption~\ref{asmp:prop}, each agent $i \in (\mtcv(\sqrt{2\delta}))^{\tx{c}}$ (which has cardinality at least $(1 - \delta/2) n$) satisfy that $(x^{+}_{i} - \gamma)((\xab)^{+}_{i} - \gamma) > 0$. 
    Hence $|y^{+}_{i} - (\yab)^{+}_{i}| = 0$ for at least $(1 - \delta) n$ agents, meaning $\|y^{+} - (\yab)^{+}\|_{\txH} \leq \delta$.    

    For $\xchk_{\init} \in X_{0}$, it follows from Lemma~\ref{lem:apprx} that $\|\xab_{\init} - x_{\init}\| \leq \sqrt{n}/(2\dx) \leq \delta \sqrt{n}$. A similar argument implies that $\|H(\xchk_{\init}) - H(\xchkab_{\init})\|_{\txH} \leq \delta$, and hence $(\xchk_{\init}, \xchkab_{\init}) \in R$.
\end{proof}

Note that~\eqref{eq:asmp_dx} provides an upper bound for $\delta$ and~\eqref{eq:apprx_condition} provides a lower bound.
Finer abstractions and a smaller $\|(I - \Lambda) W\|$ increase the admissible range of $\delta$.

The assumptions in Theorem~\ref{thm:apprx_simul} restrict the set of systems we can reason about.
Define 
\[\Pi_{1} \Let \{(\xchk_{\init}, \theta) \in \dup(\mtci^{n}) \times (\mtci^{n} \times \mtcs_{n}): \|(I - \Lambda) W \| < 1\}\] 
as the set of parameters satisfying condition~\eqref{eq:asmp_lambda_w}.
Define
\begin{align*}
    \Pi_{2}(\delta) &\Let \{(\xchk_{\init}, \theta) \in \dup(\mtci^{n}) \times (\mtci^{n} \times \mtcs_{n}): \\
    &\qquad\ \mtct(\xchk_{\init}, \theta) \tx{ satisfies Assumption~\ref{asmp:prop}}\}.
\end{align*}
Given $\delta > 0$, each abstract configuration of an initial state and parameters $(\xchkab_{\init}, \thetaab)$ can represent a set of original configurations:
\begin{align*}
    &\Pi_{3}(\xchkab_{\init}, \thetaab, \delta) \Let \{(\xchk_{\init}, \theta) \in \dup(\mtci^{n}) \times (\mtci^{n} \times \mtcs_{n}):  \\
    & \quad\ \exists 1 / (2 \delta) \leq \dx \in \NN, 1 \leq \dlambda \in \NN, \eps_{w} \geq 0 \tx{ satisfying}~\eqref{eq:apprx_condition} \\\notag
    & \quad\ \tx{ such that conditions }  \eqref{eq:asmp_W}\tx{--}\eqref{eq:asmp_lambda} \tx{ hold} \}.
\end{align*}
Denote $\Pi(\xchkab_{\init}, \thetaab, \delta) = \Pi_{1} \cap \Pi_{2}(\delta) \cap \Pi_{3}(\xchkab_{\init}, \thetaab, \delta)$, and we have the following consequence of the $\delta$-approximate simulation relation given in Theorem~\ref{thm:apprx_simul}.
\begin{prop}\label{prop:trans}
    Suppose that Assumption~\ref{asmp:apprx} holds. 
    Then for $(\xchkab_{\init}, \thetaab) \in \mtcx^{\ab}_{0} \times \Thetaab$, any $\delta > 0$ such that $\Pi(\xchkab_{\init}, \thetaab, \delta)$ is nonempty, and any $\kappa \geq 0$, $\mtctab(\xchkab_{\init}, \thetaab) \not\models \Phi^{\kappa+\delta}$ implies that $\mtct( \xchk_{\init}, \theta) \not\models \Phi^{\kappa}$ for all $(\xchk_{\init}, \theta) \in \Pi(\xchkab_{\init}, \thetaab, \delta)$.
\end{prop}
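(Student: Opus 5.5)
We argue by contraposition: fix $(\xchk_{\init}, \theta) \in \Pi(\xchkab_{\init}, \thetaab, \delta)$, assume $\mtct(\xchk_{\init}, \theta) \models \Phi^{\kappa}$, and show that $\mtctab(\xchkab_{\init}, \thetaab) \models \Phi^{\kappa+\delta}$.

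\emph{Step 1 (verify the hypotheses of Theorem~\ref{thm:apprx_simul}).} Membership in $\Pi_{1}$ gives~\eqref{eq:asmp_lambda_w}, and membership in $\Pi_{2}(\delta)$ gives Assumption~\ref{asmp:prop} with this $\delta$. Membership in $\Pi_{3}(\xchkab_{\init}, \thetaab, \delta)$ supplies integers $\dx \geq 1/(2\delta)$ and $\dlambda \geq 1$, and a number $\eps_{w} \geq 0$, such that \eqref{eq:asmp_W}, \eqref{eq:asmp_x0}, \eqref{eq:asmp_lambda} and \eqref{eq:apprx_condition} hold. The abstraction is then exactly as in Lemma~\ref{lem:apprx}, under Assumption~\ref{asmp:apprx} with these discretization constants. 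Theorem~\ref{thm:apprx_simul} therefore yields a $\delta$-approximate simulation relation $R$ from $\mtct(\xchk_{\init}, \theta)$ to $\mtctab(\xchkab_{\init}, \thetaab)$. One subtlety is that $\Pi_{3}$ quantifies over $\dx, \dlambda, \eps_{w}$ while $\mtcx^{\ab}_{0}$ and $\Thetaab$ depend on them. I would take the point $(\xchkab_{\init}, \thetaab)$ as fixed and use the witnesses from $\Pi_{3}$ to instantiate Assumption~\ref{asmp:apprx}. This bookkeeping is the only place where care is needed.

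\emph{Step 2 (pair the paths).} Both systems have singleton initial sets and deterministic transitions, so each has a unique path. The first condition of the relation gives $(\xchk(0), \xchkab(0)) \in R$. The third condition, applied inductively with uniqueness of successors, gives $(\xchk(t), \xchkab(t)) \in R$ for every $t \le T$. The second condition then gives
\[
\|y(t) - \yab(t)\|_{\txH} \le \delta, \qquad 0 \le t \le T.
\]

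\emph{Step 3 (triangle inequality).} The normalized Hamming distance is a metric. Since $\mtct(\xchk_{\init}, \theta) \models \Phi^{\kappa}$, we have $\|y(t) - \tilde{y}(t)\|_{\txH} \le \kappa$, and hence
\[
\|\yab(t) - \tilde{y}(t)\|_{\txH} \le \|\yab(t) - y(t)\|_{\txH} + \|y(t) - \tilde{y}(t)\|_{\txH} \le \kappa + \delta
\]
for all $t \le T$. Thus $g_{\varphi^{\kappa+\delta}(t)}(\yab(t)) \ge 0$ for every $t$, i.e., $\mtctab(\xchkab_{\init}, \thetaab) \models \Phi^{\kappa+\delta}$. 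This contradicts the hypothesis and proves the claim.

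The main obstacle is Step 1, namely correctly extracting from the definition of $\Pi$ every hypothesis of Theorem~\ref{thm:apprx_simul}. Steps 2 and 3 are routine.
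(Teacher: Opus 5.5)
Your proposal is correct and follows the paper's proof: argue by contraposition, invoke Theorem~\ref{thm:apprx_simul} through the definition of $\Pi(\xchkab_{\init}, \thetaab, \delta)$ to get $\|y(t) - \yab(t)\|_{\txH} \leq \delta$, then apply the triangle inequality for the normalized Hamming distance; your Step~2 makes explicit the path-pairing induction that the paper leaves implicit. The bookkeeping issue you flag is harmless, because the proofs of Lemma~\ref{lem:apprx} and Theorem~\ref{thm:apprx_simul} use only the bounds \eqref{eq:asmp_W}--\eqref{eq:asmp_lambda} and \eqref{eq:apprx_condition}, together with $\dx \geq 1/(2\delta)$, $\lambdaab \in \mtci^{n}$, $\xab_{\init} \in \mtci^{n}$, and stochasticity of $\Wab$; they never use that the abstract point lies on the grid indexed by the witnessing $(\dx, \dlambda)$.
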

\begin{proof}
    Suppose that there exists $(\xchk_{\init}, \theta) \in \Pi(\xchkab_{\init}, \thetaab, \delta)$ such that $\mtct( \xchk_{\init}, \theta) \models \Phi^{\kappa}$. The satisfaction relation implies that $\|y(t) - \tilde{y}(t)\|_{\txH} \leq \kappa$, $\forall 0 \leq t \leq T$. 
    The definition of $\Pi(\xchkab_{\init}, \thetaab, \delta)$ and Assumption~\ref{asmp:apprx} indicate that $\mtct(\xchk_{\init}, \theta) \preceq_{\delta} \mtctab(\xchkab_{\init}, \thetaab)$, which means that $\|y(t) - \yab(t)\|_{\txH} \leq \delta$, $\forall 0 \leq t \leq T$.
    Hence by the triangle inequality, $\|\yab(t) - \tilde{y}(t)\|_{\txH} \leq \delta+\kappa$, $\forall 0 \leq t \leq T$, leading to a contradiction.
\end{proof}

The previous result makes it possible to test whether the output sequence of the transition system with a given set of parameters is consistent with observations. 
In particular, to check whether the parameters and initial states in a given set are consistent with observations, we can verify $\Phi^{\kappa+\delta}$ for abstractions derived from an abstract configuration set that covers the original configuration set.
\begin{rmk}
    The value of $\delta$ depends on both Assumption~\ref{asmp:prop} and the condition~\eqref{eq:apprx_condition}.
    As noted in Remark~\ref{rmk:lem:approx}, $\eps_{x}$ can be made small when $d_{\lambda}$ and $d_{x}$ are sufficiently large and $\eps_{w}$ is small.
    This ensures the existence of $\delta$ such that~\eqref{eq:apprx_condition} holds, and thus that the set $\Pi_{3}$ is nonempty.
    The existence of $\delta$ and hence the nonemptiness of $\Pi$ rely on the initial state and the parameters of the FJ model. 
    Deriving explicit conditions for $\delta$ remains future work.
\end{rmk}

Proposition~\ref{prop:trans} implies the first part of the following result, whereas the second part follows from an argument similar to the proof of Theorem~\ref{thm:apprx_simul}.
\begin{prop}\label{prop:verification}
    Let $W \in \mtcs_{n}$ be a fixed influence matrix $W$, and let $\Wab$ be an estimate of $W$ such that $\|W - \Wab\| \leq \eps_{w}$ with $\eps_{w} \geq 0$.
    Consider $\kappa \geq \delta > 0$ and a set of state-parameter configurations $\Pi_{*} \subseteq (\dup(\mtci^{n}) \times (\mtci^{n} \times \mtcs_{n})) \cap \Pi_{1} \cap \Pi_{2}(\delta)$.
    Assume $1 / (2 \delta) \leq \dx \in \NN$ and $1 \leq \dlambda \in \NN$ satisfy~\eqref{eq:apprx_condition}, and $0 \leq \eps_{w} < (1 - \inf_{\Pi_{*}} \|(I - \Lambda) W\|) \delta$.
    Then 
    \begin{enumerate}
        \item $\mtctab(\xchkab_{\init}, \thetaab) \not\models \Phi^{\kappa+\delta}$ for all $(\xchkab_{\init}, \thetaab) \in [\Pi_{*}]_{(\dx, \dlambda)}$, 
        implies $\mtct(\xchk_{\init}, \theta) \not\models \Phi^{\kappa}$, $\forall (\xchk_{\init}, \theta) \in \Pi_{*}$,
        where 
        \begin{align*}
            &[\Pi_{*}]_{(\dx, \dlambda)} \Let \{(\xchkab_{\init}, \thetaab) \in \mtcx^{\ab}_{0} \times (\mtcl(\dlambda)^{n} \times \{\Wab\})\colon \\
            &\ (\xchkab_{\init}, \Lambdaab) \tx{ satisfies}~\eqref{eq:asmp_x0}\tx{--}\eqref{eq:asmp_lambda} \tx{ for some } (\xchk_{\init}, \theta) \in \Pi_{*} \}.
        \end{align*}
        \item If there exists $(\xchkab_{\init}, \thetaab) \in [\Pi_{*}]_{(\dx, \dlambda)}$ satisfying Assumption~\ref{asmp:prop} and $\mtct(\xchkab_{\init}, \thetaab) \models \Phi^{\kappa - \delta}$, then there exists $(\xchk_{\init}, \theta) \in \Pi_{*}$ such that $\mtct(\xchk_{\init}, \theta) \models \Phi^{\kappa}$.
    \end{enumerate}
\end{prop}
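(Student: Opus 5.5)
Part 1 reduces to Proposition~\ref{prop:trans}, applied once per concrete configuration. Fix $(\xchk_{\init}, \theta) \in \Pi_{*}$ and build its abstract representative coordinatewise. For each $i$, take $\xab_{\init,i}$ to be the nearest point of $\{\frac{1}{2\dx}, \dots, \frac{2\dx-1}{2\dx}\}$ to $x_{\init,i}$; then $|x_{\init,i} - \xab_{\init,i}| \le \frac{1}{2\dx}$, which gives~\eqref{eq:asmp_x0}. Similarly, take $\lambdaab_i$ to be the nearest point of $\mtcl(\dlambda)$ to $\lambda_i$, so $|\lambda_i - \lambdaab_i| \le \frac{1}{2\dlambda}$. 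Since $\Lambda - \Lambdaab$ is diagonal, its spectral norm is the largest entry, which gives~\eqref{eq:asmp_lambda}. With $\Wab$ as given, the pair $(\xchkab_{\init}, \thetaab)$ lies in $[\Pi_{*}]_{(\dx, \dlambda)}$ by definition.

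Next I show that $(\xchk_{\init}, \theta) \in \Pi(\xchkab_{\init}, \thetaab, \delta)$. Membership in $\Pi_1$ and $\Pi_2(\delta)$ holds because $\Pi_* \subseteq \Pi_1 \cap \Pi_2(\delta)$. For $\Pi_3$, I use the given $\dx \ge 1/(2\delta)$, $\dlambda$ and $\eps_w$. Conditions~\eqref{eq:asmp_W}--\eqref{eq:asmp_lambda} hold by the construction above. Condition~\eqref{eq:apprx_condition} holds by hypothesis, which I read as holding with the relevant $\|(I-\Lambda)W\|$ of this configuration; the assumed bound on $\eps_w$ is what makes this compatible. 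This reading is the one point I would check carefully, since $\eps_x$ does not depend on the configuration but the right-hand side of~\eqref{eq:apprx_condition} does.

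Then $\Pi(\xchkab_{\init}, \thetaab, \delta)$ is nonempty, and Proposition~\ref{prop:trans} combined with $\mtctab(\xchkab_{\init}, \thetaab) \not\models \Phi^{\kappa+\delta}$ gives $\mtct(\xchk_{\init}, \theta) \not\models \Phi^{\kappa}$. Since the configuration in $\Pi_*$ was arbitrary, part 1 follows.

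For part 2, I run the argument of Theorem~\ref{thm:apprx_simul} with the roles of the two systems reversed. Take $(\xchkab_{\init}, \thetaab) \in [\Pi_{*}]_{(\dx, \dlambda)}$ satisfying Assumption~\ref{asmp:prop} and $\Phi^{\kappa-\delta}$. By definition there is some $(\xchk_{\init}, \theta) \in \Pi_*$ for which~\eqref{eq:asmp_x0}--\eqref{eq:asmp_lambda} hold.
\begin{itemize}
\item \emph{State proximity.} Lemma~\ref{lem:apprx} is symmetric in its estimates, and $\|(I-\Lambda)W\| < 1$ holds on $\Pi_1$. Hence $\sup_t \|x(t) - \xab(t)\| \le \eps_x\sqrt{n}/(1 - \|(I-\Lambda)W\|) \le \delta\sqrt{n}$ by~\eqref{eq:apprx_condition}.
\item \emph{Output proximity.} Apply the counting argument from the proof of Theorem~\ref{thm:apprx_simul} along the abstract path, with Assumption~\ref{asmp:prop} now imposed on the abstraction. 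At most $\delta n/2$ abstract agents lie within $\sqrt{2\delta}$ of $\gamma$. Since $\|x(t) - \xab(t)\|^2 \le \delta^2 n$, at most $\delta n/2$ agents can deviate by at least $\sqrt{2\delta}$. Every other agent lies on the same side of $\gamma$ in both systems, so $\|y(t) - \yab(t)\|_{\txH} \le \delta$ for all $t \le T$.
\item \emph{Conclusion.} The triangle inequality gives $\|y(t) - \tilde{y}(t)\|_{\txH} \le (\kappa - \delta) + \delta = \kappa$, i.e.\ $\mtct(\xchk_{\init}, \theta) \models \Phi^{\kappa}$.
\end{itemize}

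I expect the main obstacle to be this counting step. The Markov-type bound has to be checked so that the two exceptional sets together have size at most $\delta n$, and the contraction factor must be the one for $(\Lambda, W)$ even though the mismatch terms are now measured around the abstract trajectory. In Lemma~\ref{lem:apprx}'s decomposition I would expand around the concrete system, so that only $\|(I-\Lambda)W\|$ appears and the bounds $\|\xab(t)\|, \|x(0)\| \le \sqrt{n}$ still hold.
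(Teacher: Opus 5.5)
Your proposal is correct and takes the same route as the paper. Part~1 constructs, for each configuration in $\Pi_{*}$, a nearest-point abstract representative in $[\Pi_{*}]_{(\dx, \dlambda)}$ and applies Proposition~\ref{prop:trans}; Part~2 reruns the state-bound-plus-counting argument of Theorem~\ref{thm:apprx_simul} with Assumption~\ref{asmp:prop} placed on the abstraction. Your reading of~\eqref{eq:apprx_condition} as holding for every configuration in $\Pi_{*}$ is exactly what the paper's argument also needs. Under that reading the stated bound on $\eps_{w}$ follows automatically from $\eps_{w} < \eps_{x}$, rather than being what makes the condition hold.
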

From Proposition \ref{prop:verification}, it follows that the abstraction yields a necessary condition for consistency with observations.
Violation of the specification by the abstraction guarantees violation by the original system, but satisfaction by the abstraction indicates potential consistency.
Therefore the verification problem introduced in Section~\ref{subsec:specification} can be addressed by verifying whether the output sequence of the abstraction is consistent with the binary observations $\{\tilde{y}(0), \dots, \tilde{y}(T)\}$.

\section{Numerical Experiments}\label{sec:simul}
In this section we present numerical experiments to illustrate the obtained results. 
To address the verification problem, we use the SMT solver Z3 (version 4.15.3), as all the constraints can be expressed in algebraic form.

Consider the FJ model~\eqref{eq:model_specific} with binary observations~\eqref{eq:binary_observation}, where the number of agents is $n = 40$.
The initial opinion $x_{i}(0)$ and the stubbornness $\lambda_{i}$ of each agent $i$ are independently sampled from the uniform distribution on $\mtci$. 
The adjacency matrix of the network is generated from a stochastic block model, where the agents are divided into two equal-sized communities. 
Two agents are connected with probability $0.3$ if they are in the same community, and with probability $0.1$ otherwise.
The influence matrix $W$ is obtained by row-normalizing the adjacency matrix.
We construct abstractions under Assumption~\ref{asmp:apprx}, where the number of abstract initial states is $\dx$ and the number of abstract stubbornness is $\dlambda$.
As shown in Fig.~\ref{fig:trajectory}, when $\dx$ and $\dlambda$ increases, the trajectories of the abstractions become closer to the original system, where the same influence matrix is used in the abstractions.
We further calculate the approximation error between the states, $\max_{1 \leq t \leq T} \|x(t) - \xab(t)\|$, and between the outputs, $\max_{1 \leq t \leq T} \|y(t) - \yab(t)\|_{\txH}$, where the trajectory length $T = 9$.
Fig.~\ref{fig:error} shows the result where the abstractions have the same influence matrix as the original system, whereas Fig.~\ref{fig:error_EW} shows the case where the abstract influence matrix is obtained by row-normalizing the expected adjacency matrix.
In both cases, the error decreases as $\dx$ and $\dlambda$ grows.
The errors are also comparable, indicating that the adjacency matrix may be replaced by its expected version.

The previous example intuitively illustrates that abstractions can approximately simulate the original system when $\dx$ and $\dlambda$ is large and $\eps_{w}$ is small, which validates Proposition~\ref{prop:trans}.
To further illustrate the latter result, we generate a sequence of binary outputs $\{\tilde{y}(t), 0 \leq t \leq T\}$ with $T = 9$ from a system $\mtct(\xchk, (\lambda, W))$ with initial opinion $\xchk$, stubbornness vector $\lambda$, and influence matrix $W$.
Then we sample an abstract initial opinion $\xchkab$ and an abstract stubbornness parameter $\Lambdaab$, which satisfy Assumption~\ref{asmp:apprx}.
By setting $\Wab = W$, we define an abstraction $\mtctab = (\xchkab, (\lambdaab, \Wab))$.
This abstraction generates an output sequence $\{y(t), 0 \leq t \leq T\}$, which has a large error $\|y(t) - \tilde{y}(t)\|_{\txH}$, as shown in Fig.~\ref{fig:error_t_alpha}.
To see the change in observation error, we consider systems $\mtct(\xchk(\alpha), (\lambdaab(\alpha), W))$, where $\xchk(\alpha) = (1 - \alpha) \xchkab + \alpha \xchk$ and $\lambdaab(\alpha) = (1 - \alpha) \lambdaab + \alpha \lambda$ are linear combinations of the abstract and original parameters with $\alpha \in [0,1]$.
Fig.~\ref{fig:error_t_alpha} shows that systems with small $\alpha$ generate outputs with errors similar to those of the abstract configuration $(\xchkab, \lambdaab)$, indicating that the abstraction is a representative of proximal systems.

\begin{figure}[tbp]
    \centering
    \subfigure[\label{fig:trajectory} Trajectories of three agents in the FJ model and two abstractions. The solid lines represent the original FJ model, the dashed lines represent the abstraction with $\dx = 2$ and $\dlambda = 3$, and the dotted lines represent the abstraction with $\dx = \dlambda = 6$. ]{~~~~~~~~~~~~~~~
        \includegraphics[width=0.2\textwidth]{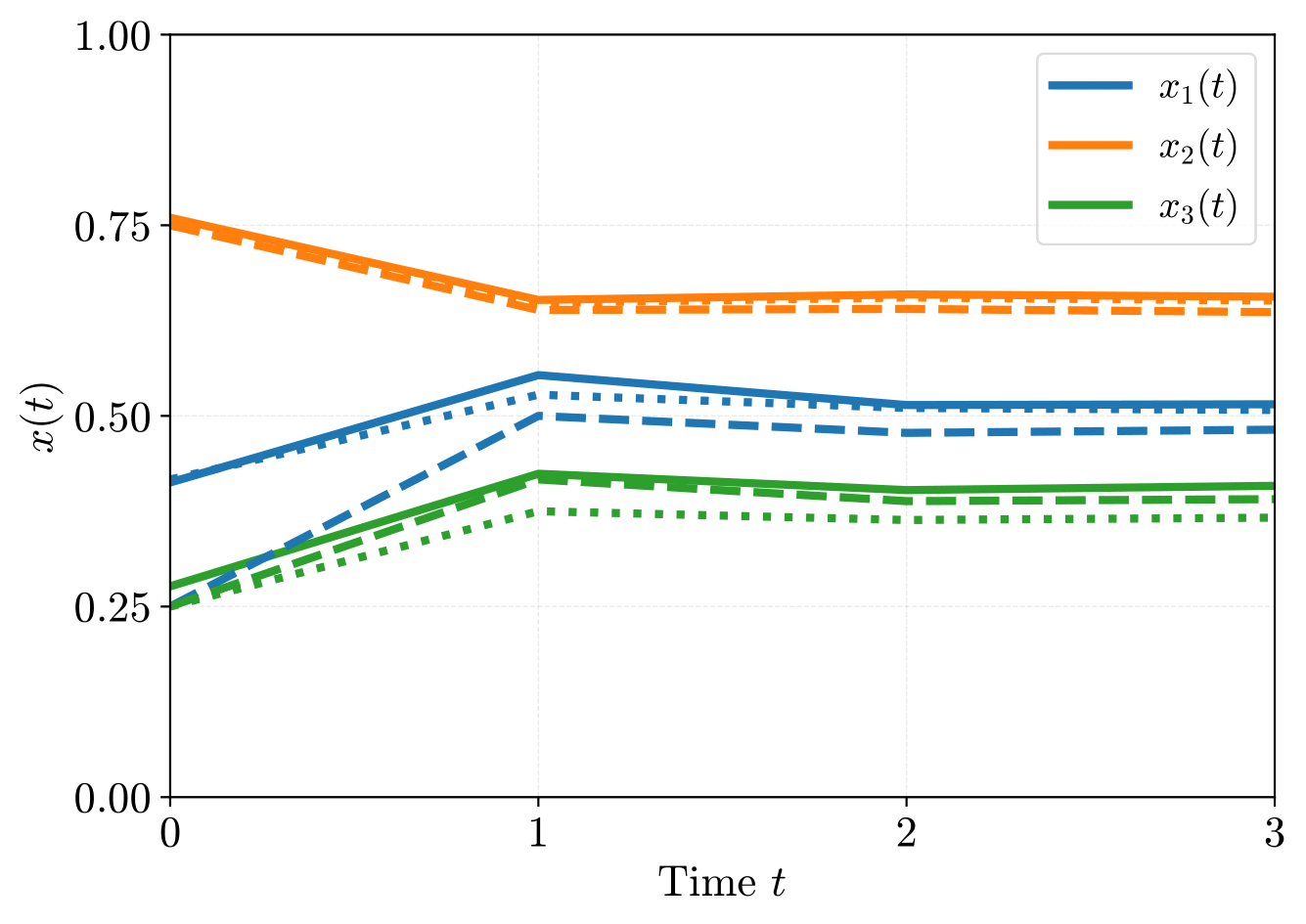} ~~~~~~~~~~~~~~~
    }
    \subfigure[\label{fig:error}The errors between the states (left) and the outputs (right) of the FJ model and its abstractions, where the abstractions have the same influence weight matrix as the original model. ]{ 
        \includegraphics[width=0.19\textwidth]{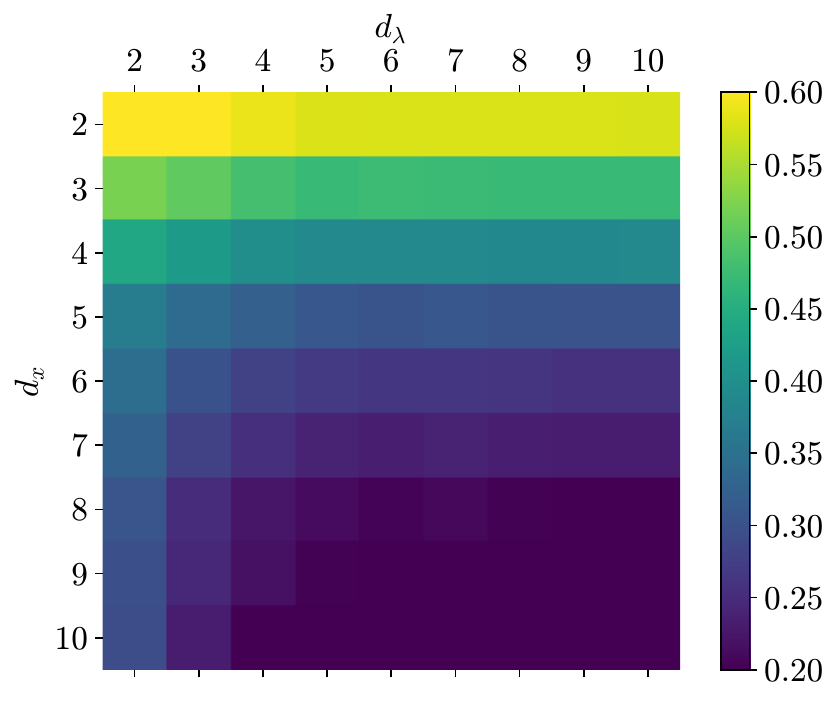} ~~~
        \includegraphics[width=0.19\textwidth]{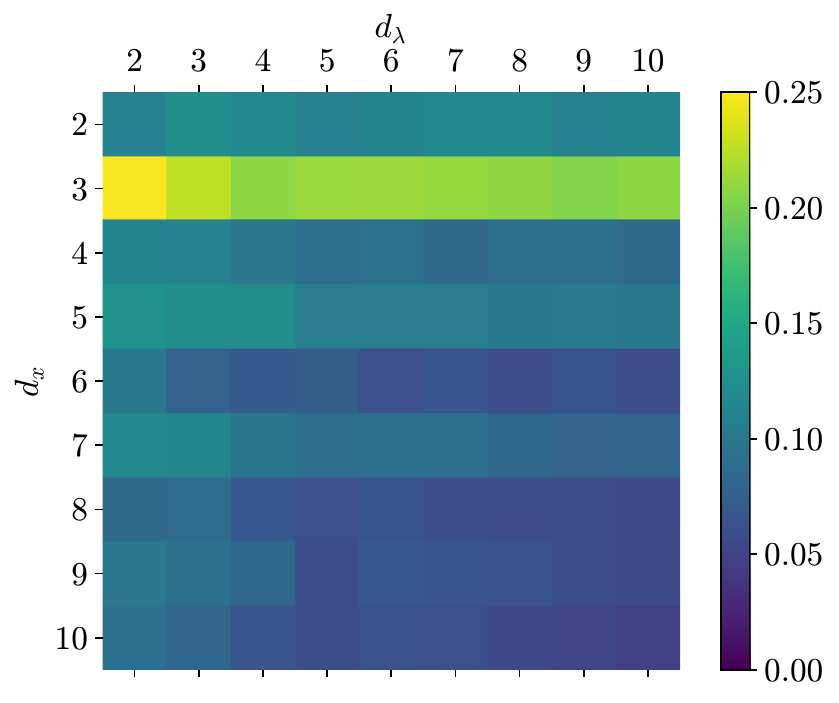}  
    }
    \subfigure[\label{fig:error_EW}The errors between the states (left) and the outputs (right) of the FJ model and its abstractions, where the influence matrix of the abstractions is obtained by row-normalizing the expected adjacency matrix. ]{ ~~~
        \includegraphics[width=0.19\textwidth]{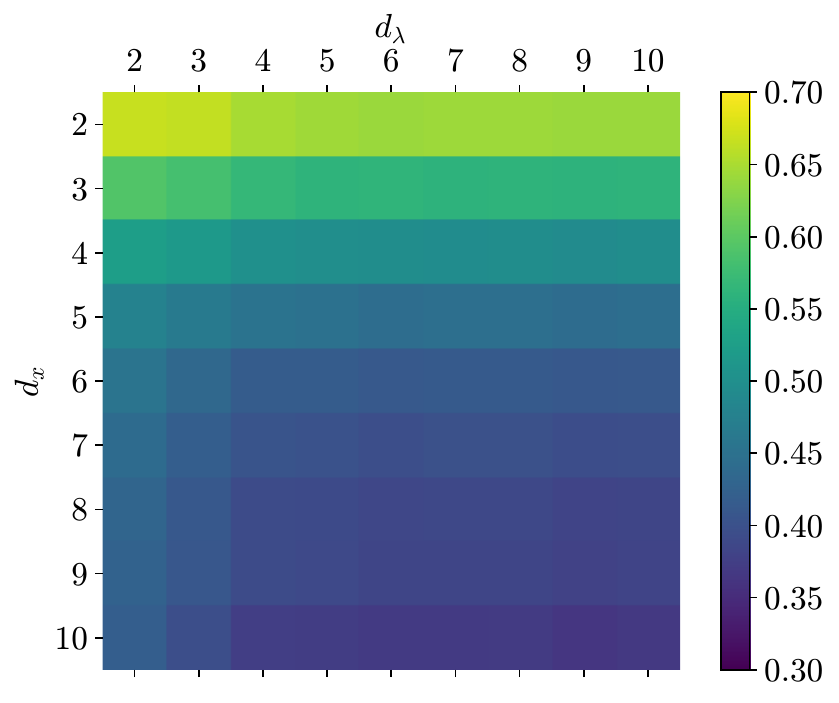} ~~~
        \includegraphics[width=0.19\textwidth]{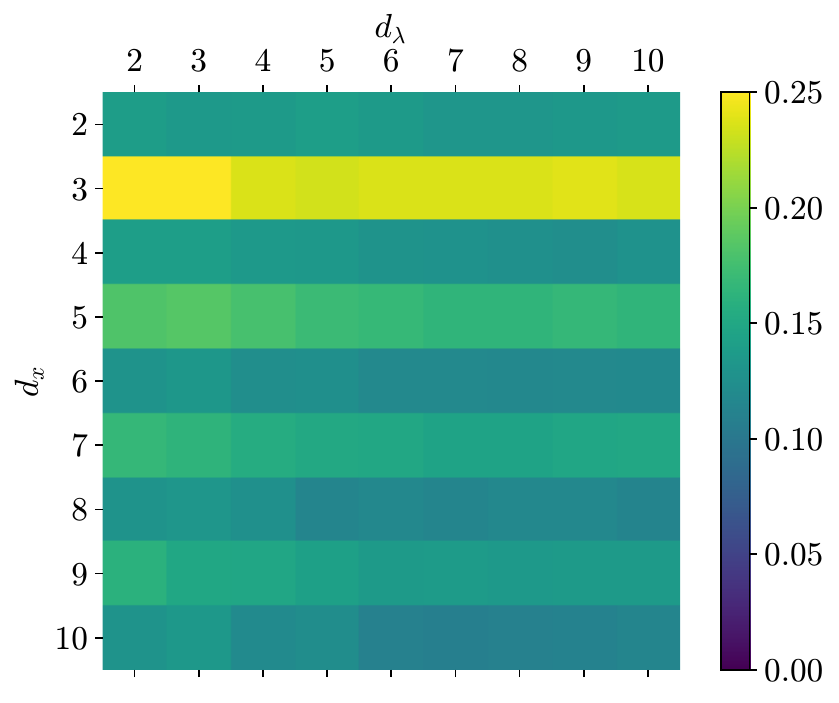}  ~~~
    }
    \caption{\label{fig:sample}Comparison between the FJ model and its abstractions.}
    \vspace{-0.75cm}
\end{figure}

\begin{figure}[tbp]
    \centering
    \includegraphics[width=0.26\textwidth]{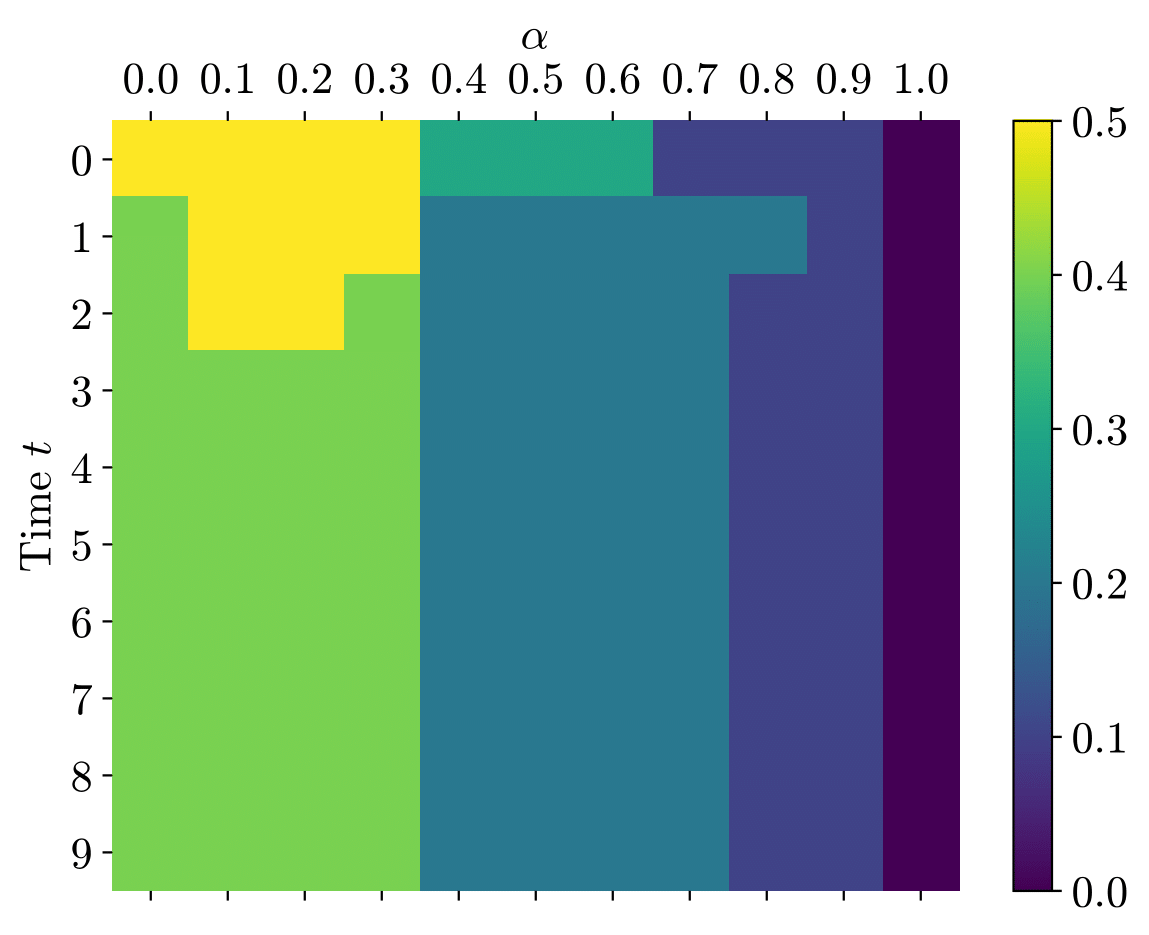} 
    \caption{\label{fig:error_t_alpha} Output distance at each step $t$ between the systems $\mtct(\xchk(\alpha), (\Lambdaab(\alpha), W))$ and $\mtct(\xchk, (\Lambda, W))$, where $\xchk(\alpha) = (1 - \alpha) \xchkab + \alpha \xchk$ and $\Lambdaab(\alpha) = (1 - \alpha) \Lambdaab + \alpha \Lambda$. Each column corresponds to one abstraction.}
\end{figure}

Note that the binary observations provide limited information, and multiple parameter configurations can define FJ models producing identical output sequences.
Consider an example system with $n = 10$, generating a binary output sequence of length $T = 9$.
The initial opinions are sampled uniformly from the discrete set $\mtcx_{0}^{\ab}(d_{x})$ with $\dx = 2$, and the stubbornness from the set $\mtcl(\dlambda)$ with $\dlambda = 3$.
The model checker can return multiple valid parameter configurations as solutions, if it is only asked to verify whether there exists a system that generates the binary observations.
To reduce the number of solutions, one can introduce a constraint $\|\Lambda - \hat{\Lambda}\| \leq \eps_{\Lambda}$, where $\Lambda = \diag(\lambda)$ represents the true parameters, and $\hat{\Lambda} = \diag(\hat{\lambda})$ is a solution.
Fig.~\ref{fig:solutions} shows that the number of solutions increases as $\eps_{\Lambda}$ becomes larger.

Finally, we show how incorporating structural information can reduce the complexity of the model checking.
An FJ model with $n = 200$ is used to generate an output sequence with $T = 3$. 
The initial opinions are sampled uniformly from the discrete set $\mtcx_{0}^{\ab}(d_{x})$ with $\dx = 2$. 
For stubbornness, we assume that $20\%$ of the agents are totally stubborn (i.e., $\lambda_{i} = 1$), and for the remaining agents, we sample $\lambda_{i}^{*}$ from the set $\mtcl(\dlambda)$ with $\dlambda = 3$. 
The network has a block structure with two equal-sized communities, and the weighted adjacency matrix $A = [a_{ij}]$ is given by $a_{ij} = 5$ if $i$ and $j$ are in the same community, and $a_{ij} = 3$ otherwise. 
The influence matrix $W$ is then obtained by row-normalizing $A$.
Such simplification is valid, as demonstrated by the first experiment.
The problem is to check whether there exists a configuration consistent with the outputs such that the stubborn parameter $\lambda_{i}$ of each agent $i$ is within the interval $[\lambda_{i}^{*} - \eps, \lambda_{i}^{*} + \eps] \cap \mtci$ with $\eps = 0.5$, i.e., whether there exists $(\xchk, \theta) \in \mtcx^{\ab}_{0} \times ((\prod_{i} \mtcl_{i}) \times \{W\})$ such that $\mtct(\xchk, \theta) \models \Phi^{\kappa}$, where $\mtcl_{i} = \mtcl(\dlambda) \cap [\lambda_{i}^{*} - \eps, \lambda_{i}^{*} + \eps]$.
The solver finds a solution in all sampled cases, with execution time given in Fig.~\ref{fig:runtime}.
Now assume that the stubborn agents and their initial opinions are known, representing a scenario where opinion leaders are clearly identified.
Since the community structure is known and agents can take only two values ($\dx = 2$), we divide the stubborn agents into four groups and simplify the model, by assigning a common representative state to each group of stubborn agents.
The system dimension is reduced by using the group cardinality, and the execution time decreases in most cases (see Fig.~\ref{fig:runtime}).

\begin{figure}[tbp]
    \centering
    \includegraphics[width=0.22\textwidth]{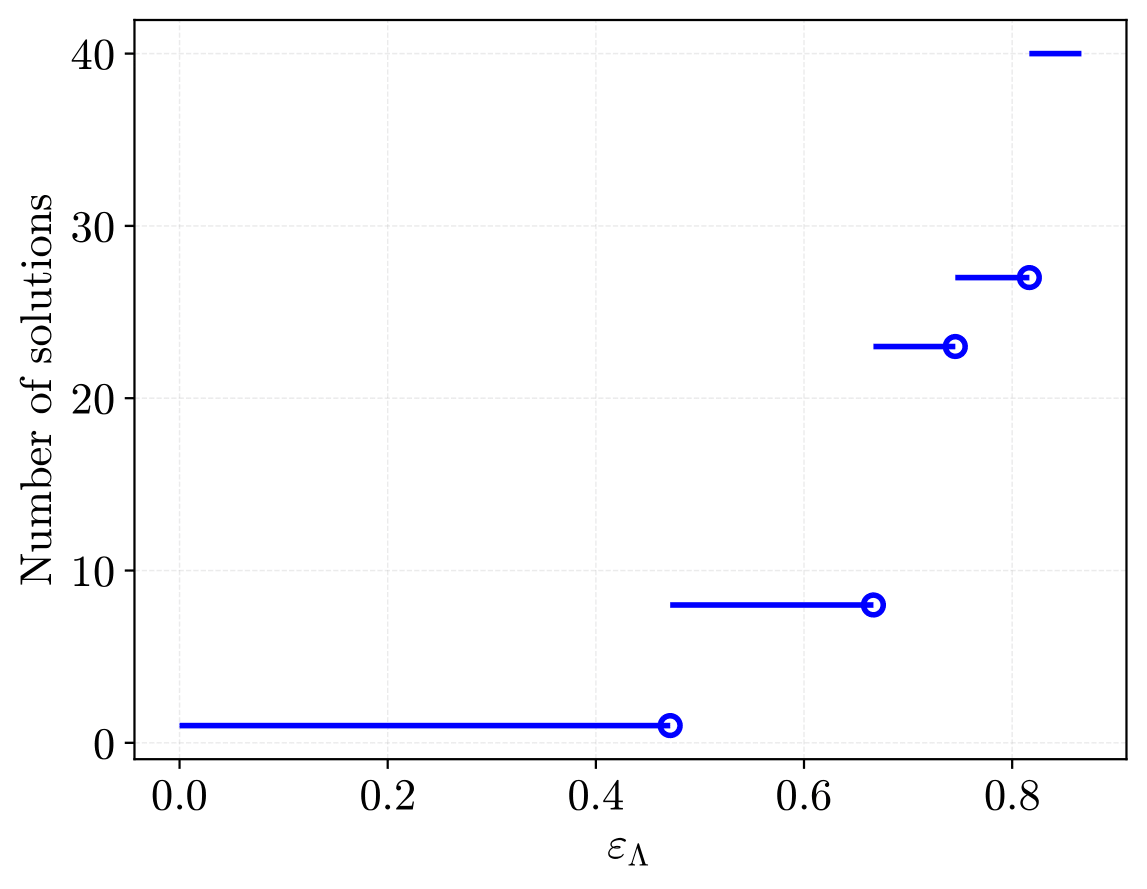} 
    \caption{\label{fig:solutions} The number of solutions to the verification problem under the constraint $\|\Lambda - \hat{\Lambda}\| \leq \eps_{\Lambda}$.}
\end{figure}

\begin{figure}[tbp]
    \centering
    \includegraphics[width=0.22\textwidth]{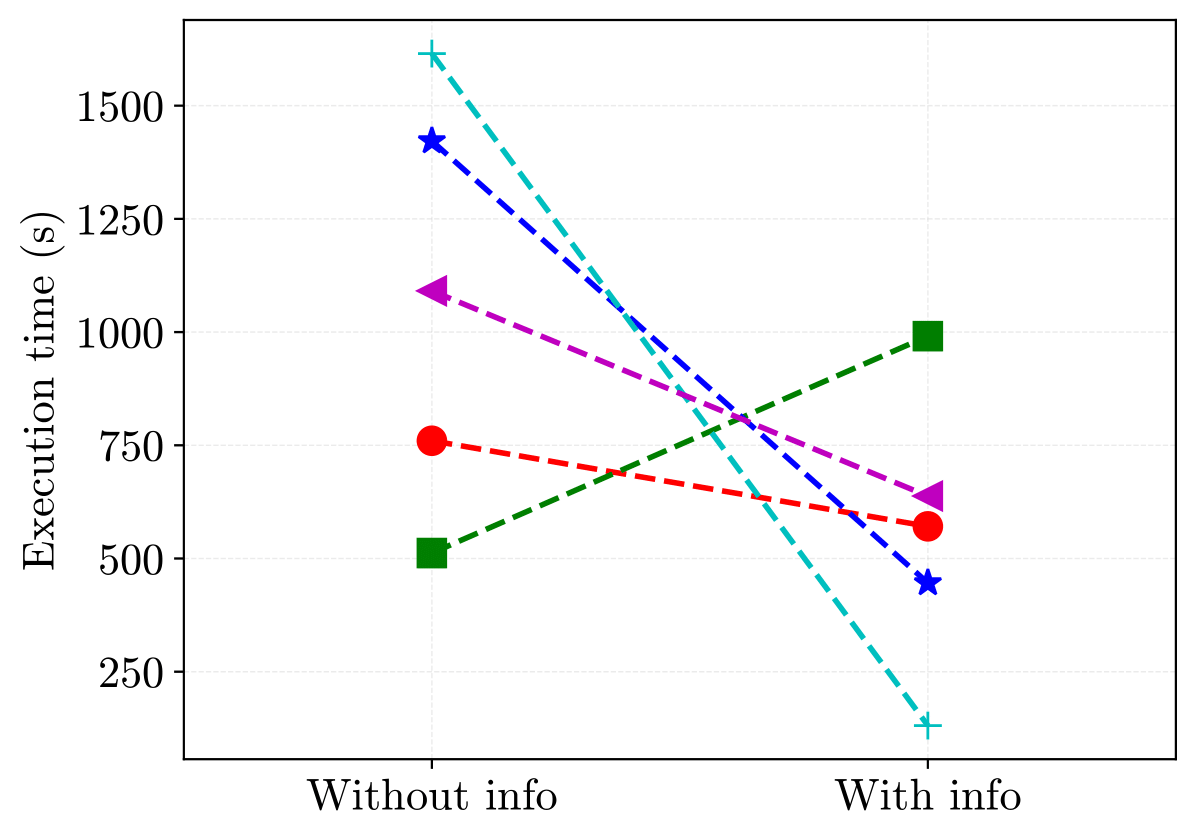} 
    \caption{\label{fig:runtime} The execution time of the model checker with and without the information of stubborn agents.}
    \vspace{-0.50cm}
\end{figure}

\section{Conclusion}\label{sec:conclusion}
We considered a verification problem for the FJ model with binary observations.
We constructed finite abstractions of the FJ model and showed that these abstractions can approximately simulate the original FJ dynamics.
Future work includes optimizing the verification implementation and applying the framework to datasets.







\bibliographystyle{ieeetr}
\bibliography{bibliography}

\end{document}